\newtheorem{theorem}{Theorem}
\newcommand{\uto}{U_{t+1}}
\newcommand{\ut}{U_{t}}
\newcommand{\oto}{O_{t+1}}
\newcommand{\ot}{O_{t}}
\newcommand{\sto}{S_{t+1}}
\newcommand{\st}{S_{t}}
\newcommand{\go}{e^{\frac{-k}{N}\ot}}
\newcommand{\gs}{e^{\frac{-k}{N}\st}}
\newcommand{\mmo}{O}
\newcommand{\mmu}{U}
\newcommand{\mms}{S}
\newtheorem{lem}{Lemma}
\begin{document}
\title{The Dynamics of Offensive Messages in the World of Social Media: the Control of Cyberbullying on Twitter}
\author{Krystal Blanco$^{1}$, Aida Briceno$^{2}$, Andrea Steele$^{3}$, Javier Tapia$^{4}$, \\ John McKay$^{5}$, Sherry Towers$^{6,7}$, Kamuela E. Yong$^{7,8}$}
\date{}
\maketitle
\begin{center}
\footnotesize $^{1}$ Department of Mathematics, Boston University, Boston, MA\\
\footnotesize $^{2}$ Department of Mathematics and Computing, Columbia College, Columbia, SC\\
\footnotesize $^{3}$ Department of Biology, Medgar Evers College (CUNY), Brooklyn, NY\\
\footnotesize $^{4}$ Department of Mathematics, St. Mary's University, San Antonio, TX\\
\footnotesize $^{5}$ Department of Applied Mathematics for the Life and Social Sciences, Arizona State University, Tempe, AZ\\
\footnotesize $^{6}$ Department of Physics, Purdue University, West Lafayette, IN\\
\footnotesize $^{7}$ Mathematical, Computational \& Modeling Sciences Center, Arizona State University, Tempe, AZ\\
\footnotesize $^{8}$ School of Mathematical \& Statistical Science, Arizona State University, Tempe, AZ\
\end{center}

\vspace{0.3in}
\begin{abstract} 
The 21st century has redefined the way we communicate, our concept of individual and group privacy, and the dynamics of acceptable behavioral norms. The messaging dynamics on Twitter, an internet social network, has opened new ways/modes of spreading information. As a result cyberbullying or in general, the spread of offensive messages, is a prevalent problem. The aim of this report is to identify and evaluate conditions that would dampen the role of cyberbullying dynamics on Twitter. 
We present a discrete-time non-linear compartmental model to explore how the introduction of a Quarantine class may help to hinder the spread of offensive messages. 
We based the parameters of this model on recent Twitter data related to a topic that communities would deem most offensive, and found that for Twitter a level of quarantine can always be achieved that will immediately suppress the spread of offensive messages, and that this level of quarantine is independent of the number of offenders spreading the message. 
We hope that the analysis of this dynamic model will shed some insights into the viability of new models of methods for reducing cyberbullying in public social networks.
\end{abstract}

\newpage


\section{Introduction}
Social media networking websites like Twitter and Facebook have fundamentally changed how people communicate and socialize in the 21st century. With over 200 million users on the popular microblogging site Twitter in 2011 \cite{Qiu2012}, exploring the social impact of using such a service becomes ever more important. Although Twitter has been considered by critics as a site where people tweet (i.e. update statuses) about mundane things such as what they are having for dinner, researchers have recently taken an interest in Twitter to study social-behavioral attitudes \cite{Gruzd2011}, positive and negative influences of popular users \cite {Bae2012}, and even temporal patterns of happiness using ideas from linguistics \cite{Dodds2011}. In this emerging culture of Tweeters (i.e. one who tweets) that connect from places all over the world, there also comes the bigger problem of cyberbullying \cite{Hinduja2008}.

Twitter connects people by allowing them to send short bursts of information, called tweets, consisting of up to 140 characters in length, to other users. What makes Twitter different from other social networking sites is the character limit imposed on tweets, along with its asymmetrical nature: you may follow someone, but they do not have to follow you. A user has three different types of relationships with fellow Tweeters: you can have followers (i.e. friends), which are people who follow you and can see your tweets; you can also have people who you are following; and finally there are bi-directional friends, which are people who are mutually following each other. 

Despite the different relationships that occur on Twitter, when a user tweets to the Twitter community, a response is not necessary. User's friends can see all the tweets sent by the user. To involve a particular person in a tweet or a conversation, common practice is to use `@' followed by a unique identifier address. To join a conversation, one can reply to a tweet, which uses `@'. One can also retweet, which allows a user to tweet the same message to all his/her friends while giving credit to the original tweeter, allowing the message to reach a wider range of Twitter users \cite{Kwak2010}. For simplicity of presentation in this analysis, we use the term ``retweet" to describe both types of message repetition. In their retweets, users can use a hashtag, denoted by `\#,' which declares a tweet as being a part of a larger conversation whose topic is related to the phrase which comes after the hashtag \cite{Zappavigna2011}. We differentiate between keywords and hashtags: hashtags are used to identify the topic of a tweet, while keywords are found in the actual tweet.

Since the advent of social media sites like Twitter, there has been research on cyberbullying \cite{Gorzig2013,Li2006,Xu2012}. Some research into methods for stopping cyberbullying on Twitter can be found in Xu \textit{et al.} \cite{Xu2012}. In their report, the authors used natural processing techniques to recognize traces of bullying tweets \cite{Xu2012}, and Gorzig and Frumkin suggested raising awareness for adolescents, especially with regard to privacy settings \cite{Gorzig2013}. While a user can block another person specifically on Twitter, no current method exists to block offensive or hurtful tweets from the Twitter community. Per the terms of services, Twitter does not currently suspend accounts for cyberbullying (available at {\tt http://twitter.com/tos}). In this study, we employ a concept refered to as ``quarantine" that would enable Twitter to temporarily separate offenders from the Twitter community, therefore protecting users from the spread of offensive messages. 

In this analysis, we extended the deterministic compartmental model developed by Zhao \textit{et al.} for rumor spreading in the new media age \cite{Zhao2013a}. In the model, they built on the pioneering work of Daley and Kendall \cite{Daley1965}, who in 1965 introduced a model that looks at the spread of rumors as a non-standard contagion process. The Daley-Kendall model was the first rumor spreading model developed, it used the terms Ignorants, Spreaders and Stiflers analogous to the disease model classes of Susceptible, Infected and Recovered, respectively. Zhao \textit{et al.} also looked at the Maki-Thompson rumor model \cite{Maki1973}, which showed the spreading of rumors through direct contact between spreaders and others, and the model developed by Nekovee \textit{et al.} which combined the SIR epidemic model and the Maki-Thompson model on complex social networks \cite{Nekovee2007}. By drawing inferences from these models, Zhao \textit{et al.} developed their own model which included a ``Hibernator" class where the members of this class become disinterested in spreading the rumor but could become interested in spreading the same rumor again. We considered instead a model where users are ``quarantined," i.e. users are limited in their ability to spread the rumor through the enforced limitation on their contact with other members of the population. To do this, we developed a discrete-time compartmental model that simulates
the dynamics of rumor spreading in social networks, and examined how user's degree of quarantine hinders the spread of offensive messages. We estimated model parameters by comparing the model predictions to patterns observed in Twitter data related to a topic that communities would find offensive.

In the following sections we describe the sources of data used in this analysis and give a description of the discrete-time compartmental model used to simulate
spread of tweets, followed by a presentation and discussion of results.

\tikzstyle{blockblue} = [rectangle, draw, fill=blue!20, 
    text width=5em, text centered, rounded corners, minimum height=4em]
\tikzstyle{blockred} = [rectangle, draw, fill=red!20, 
    text width=5em, text centered, rounded corners, minimum height=4em]
\tikzstyle{blockgreen} = [rectangle, draw, fill=green!20, 
    text width=5em, text centered, rounded corners, minimum height=4em]
\tikzstyle{blockviolet} = [rectangle, draw, fill=violet!20, 
    text width=5em, text centered, rounded corners, minimum height=4em]

\section{Methods and Materials}

\subsection{Data Collection} 
\label{sec:Data}
Twitter uses Application Programming Interface (API) version 1.1 which allows us to ``scrape" public data off the website ({\tt https://twitter.com}). To be able to access the API, we use the R programming language package ``twitteR" ({\tt http://cran.r-project.org/web/packages/twitteR/}). This package acquires the information about a tweet such as the time created, the screen name of the person who tweeted or retweeted, and their follower and friend counts. In order to decide which keywords to use, we conducted a search on the Twitter website and searched for trending hashtags which contained offensive language. We noticed that anti-gay sentiment is prevalent on Twitter so we chose words that expressed disgust towards the gay community to use as an example. There is no doubt that these tweets are offensive to most people. We used the ``searchTwitter" function to search within the days of July 6-10, 2013 for offensive tweets or retweets containing the keywords ``disgusting" and ``gay" yielding a total of 884 tweets. The data showed that there were 100 tweets from this sample of 478 original tweets that had been retweeted at least once. This is in qualitative concordance with Kwak \textit{et al.} where they found the fraction of not retweeting to be 79\% \cite{Kwak2010}, and Grabowicz \textit{et al.} who found that 85\% of tweets were not retweeted \cite{Grabowicz2012}. The average duplication of a tweet in our sample was 1.85 times. In addition, we estimated the probability of a message being retweeted by using $(1-\alpha)=$ (number of retweets)/(number of friends of offensive tweeters), yielding $(1-\alpha)$ = $478 / 275960$. Thus $\alpha$ $\sim$ 0.999, which will be included as a parameter in our model below.

Rodrigues \textit{et al.} found that 99\% of Twitter users have fewer than 20 followers \cite{Rodrigues2011}. 
Furthermore, the degree of social networks has been found to follow a power law functional relationship, with exponent $2 \le \gamma \le 3$ \cite{Kwak2010}.  The probability of having $m$ Twitter followers is thus $P(m) = m^{-\gamma}$ \cite{Kwak2010}.  With the power law relationship, most users have few followers and a very few users have many followers.  The average degree of the network, $k$, is $E[m]$, which has a value of 1.5 to 10 for $\gamma
\in [2,3]$.  For our model analysis below, we assume $k$ = 10. 

\subsection{Model}
We developed a discrete-time model to simulate the dynamics of the spread of messages on Twitter, because tweets occur at discrete times, whereas a continuous-time model includes a variable window of times between the tweets. In the following subsections, we describe a basic model of Twitter message spreading dynamics that does not include a Quarantine class. We determined an expressions for the threshold value of this model, then we extended the model to include a Quarantined class and discuss the implications that the addition of this class has for the threshold values of the system.   

\subsubsection{Basic Twitter Model Without Quarantine Class} \label{Section:Basic}
A simple discrete-time rumor model for describing the spread of a message in a social network is:
\begin{align}
U_{t+1} & = U_{t} G_{1,t} \nonumber \\
O_{t+1} & = (1 - \alpha) (1 - G_{1,t})  U_{t} + G_{2,t} O_{t} \nonumber  \\
S_{t+1} & = S_{t} + \alpha (1 - G_{1,t}) U_{t} + (1 - G_{2,t}) O_{t}, \label{eqn:basic}
\end{align}
which was proposed by Zhao \textit{et al.} \cite{Zhao2013a}, where $U_{t}$, $O_{t}$, $S_{t}$ are the Uninformed, Offender, and Stifler classes at time $t$, respectively, where the constant population size is $N = U_{t} + O_{t} + S_{t}$, and $\alpha$ is the probability that an uninformed user becomes a stifler after seeing an offensive tweet (which we estimated from Twitter data in the previous section).  The parameter $(1-G_{1,t})$ is the probability of seeing an offensive tweet, and $(1-G_{2,t})$ is the probability of an offender becoming a stifler. To ensure that these two probabilities lie between 0 and 1, we use the form \cite{Hernandez-Ceron2013}:
\begin{align}
G_{1,t} & = e^{-\frac{k}{N} (O_{t})} \nonumber \\
G_{2,t} & = e^{-\frac{k}{N} (S_{t})} \nonumber, 
\end{align}
where $k$ is the degree of the network, which is estimated from studies of Twitter data in the published literature, as described in the previous section.  
 The parameters of the model are shown in Table \ref{Tab:params} and the compartmental diagram is shown in Figure \ref{Model:BasicComp}.

In Lemma~\ref{lem:gt0} in Appendix A we show that if the initial values of $U$, $O$, and $S$ are non-negative, then all future populations in those classes will be non-negative.  Taken in conjunction with the
fact that the system is bounded such that the populations in each of the classes sums to $N$
(see Lemma~\ref{lem:bound} in Appendix A), we see that 
the population within each class can be at most $N$.
We note that as a result, the population in the uninformed class monotonically 
decreases asymptotically to zero, because $G_{1,t}\le1$ when $k>0$ and
$O_t\ge0$ (for a formal proof, see Lemma~\ref{lem:decU} in Appendix A). Additionally, the population in the Stifler class monotonically increases, because the
last two terms in $S_{t+1} = S_{t} + \alpha (1 - G_{1,t}) U_{t} + (1 - G_{2,t}) O_{t}$ are both
at least zero for all $0\le U_t\le N$, $0\le O_t\le N$, $0\le\alpha\le1$ and $k>0$. Further, when the
population in the Stifler class is $N$ we note that $U_t=0$ and $O_t=0$, and that from
System~\ref{eqn:basic} we see that $S_{t+1}=S_t$.  Thus, $S_t$ asymptotically monotonically increases to $N$.

In Theorem~\ref{thm:thresh} in Appendix A we show that if 
\begin{eqnarray}
  Z = {{k (1-\alpha) U_t}\over{N (1- G_{2,t})}} < 1
\label{eqn:R}
\end{eqnarray}
then $O_{t+1} < O_t$.  Since $U_t$ monotonically decreases to zero, and $(1-e^{-kS_t/N})$ monotonically
increases because $S_t$ monotonically increases, we see that $Z$ montonically decreases to zero.
Thus there will always exist a point at which $Z<1$, and $O_{t+1}<O_{t}$.  We thus conclude
that even without quarantine, the Offender class will eventually begin to die out.
In the next section, we will discuss how the implementation of a Quarantine class ensures
that the Offender class will begin to die out more quickly.

\tikzstyle{blockblue} = [rectangle, draw, fill=blue!20, 
    text width=5em, text centered, rounded corners, minimum height=4em]
\tikzstyle{blockgreen} = [rectangle, draw, fill=green!20, 
    text width=5em, text centered, rounded corners, minimum height=4em]
\tikzstyle{blockviolet} = [rectangle, draw, fill=violet!20, 
    text width=5em, text centered, rounded corners, minimum height=4em]

\begin{figure}[!ht]
 \centering
 \includegraphics[height=2.5in]{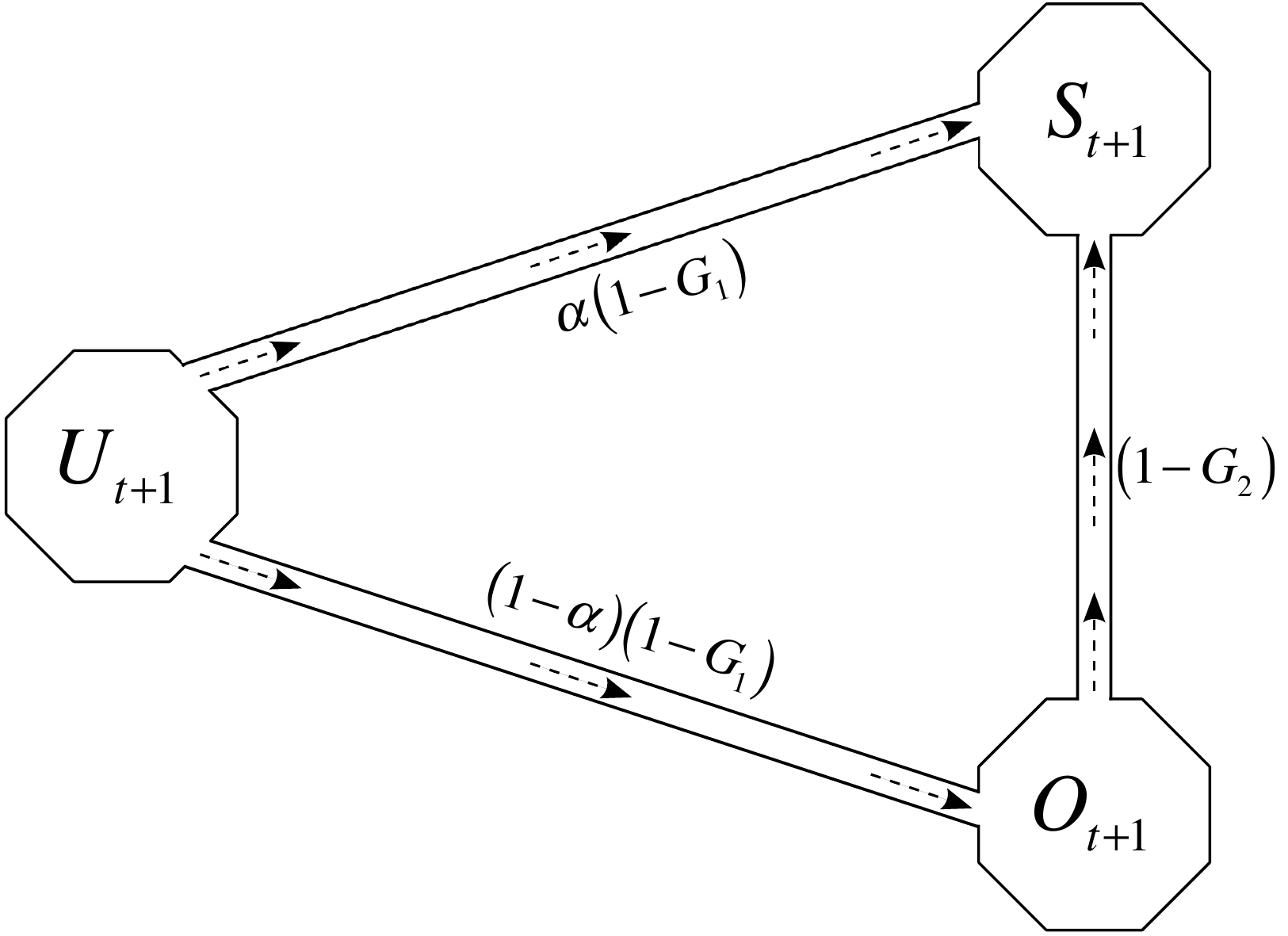} 
\caption{\label{Model:BasicComp}Compartmental Model Structure for Basic Twitter Model} 
 \label{flow_chart1}
\end{figure}

\noindent 
Before we move on to the quarantine model, we examine the equilibrium points for this basic model. There are two equilibrium points:

\begin{align*}
(U_{1}^{*},\,\,O_{1}^{*},\,\,S_{1}^{*}) &= (0,\,\,N,\,\,0) \\
(U_{2}^{*},\, O_{2}^{*},\, S_{2}^{*}) &= (U, \,\,0, \,\, S) 
\end{align*}

The first equilibrium point represents the case where there are only offenders, because the only way for an offender to leave the class is to come in contact with a stifler, of which there are none.
The second equilibrium point is analogous to the disease-free equilibrium because there is no one in the Offender class. We refer to this point as the  offender-free equilibrium (OFE). 

The Jacobian for this basic model is:
$$J = \left(
 \begin{array}{ccc}
 G_{1,t} & -\frac{k}{N} G_{1,t} U_{t} & 0 \\
 (1 - \alpha) (1 - G_{1,t}) & \frac{k}{N}(1 - \alpha) G_{1,t} U_{t} + G_{2,t} & -\frac{k}{N} G_{2,t} O_{t} \\
 \alpha (1 - G_{1,t}) & \frac{k}{N} \alpha G_{1,t} U_{t} + (1 - G_{2,t}) & \frac{k}{N} G_{2,t} O_{t} + 1 \\
 \end{array}
 \right).$$

\noindent The eigenvalues of the Jacobian corresponding to the first equilibrium point, ($0,\,\,N, \,\,0$), which we refer to the offender-only equilibrium, are:
$$\Lambda_{OOE} = \left(
 \begin {array}{c}
  1 \\
  e^{-k} \\
  k + 1 \\
\end {array}
\right).
$$
We note that $k>0$ and thus the third eigenvalue is always greater than 1.  We thus conclude that the
first equilibrium point is not stable.

Evaluating the Jacobian at the offender-free equilibrium, ($U$, $0$, $S$), we obtain 
\begin{equation}
J_{OFE} = 
\left(\begin{array}{ccc}
  \frac{k}{N} (1 - \alpha) U_{t} + G_{2,t} & 0 & 0 \\ 
  \frac{k}{N} U_{t} & 1 & 0 \\
  \frac{k}{N} U_{t} + (1 - G_{2,t}) & 0 & 1 
\end{array}\right), \label{JOFE}
\end{equation} 
with eigenvalues 
$$\Lambda_{OFE} = 
\left(  \begin {array}{c} 
 \frac{k}{N} (1 - \alpha) U_{t} + G_{2,t} \\
 1 \\
 1 \\
\end {array} 
\right).$$
Note that if $U=0$ and $S=N$ the first 
eigenvalue is $\Lambda_{(1)} = e^{-k}$ which is always greater than zero and less than 1 since $k>0$. Thus there
always exists at least one value of $(U,S)$ such that the first eigenvalue is less than 1.
However, the other two eigenvalues, $\Lambda_{(2,3)}$ are equal to 1, which does not {\it a priori}
indicate that the equilibrium point is unstable, but neither does it imply stability.

Eigenvalues equal to one can result from system constraints not being applied. We note that
one of the equations in System~\ref{eqn:basic} can be removed if we apply the constraint
$U_t+S_t+O_t=N$.  Evaluating the eigenvalues of the Jacobian of this reduced system about the
offender-free equilibrium yields
$$\Lambda_{OFE} = 
\left(  \begin {array}{c} 
 \frac{k}{N} (1 - \alpha) U_{t} + G_{2,t} \\
 1 \\
\end {array} 
\right).$$
The Jury Criterion is an algorithmic method based
upon the characteristic equation of the Jacobian that is used
to assess the stability of discrete systems\cite{allen2007}. 
A prior condition for use of the method is that the characteristic equation $D(\Lambda)$, must
be greater than zero when $\Lambda=1$.  This condition is not met for this model, thus we cannot
apply the Jury criterion.
Thus, in order to assess the stability of this equilibrium point, we would have to resort to 
higher-order methods, rather than just linearization of the system about the offender-free equilibrium
\cite{richard1996first}.  This is beyond the scope of this paper, and we thus conclude that
that the stability of the offender-free equilibrium in this model is
currently indeterminate.

To calculate the Next Generation Matrix of the model, we use the methods of Allen and van den Driessche, 2008~\cite{Allen2008}, 
and we reorder the columns and rows of the offender-free Jacobian in Equation~\eqref{JOFE} 
such that the Offender (infected) class comes first, and the Uninformed and Stifler (uninfected) classes follow.
Then, as in \cite{Allen2008}, we identify the components of this Jacobian with the following form:
$$ J_{OFE} = \left(
\begin{array}{cc}
  F + T & 0_{2x1} \\
  A & C
\end{array}
\right).
$$
Then the matrix $F$ is given by
$$ F =  
\begin{array}{c}
 \frac{k}{N} (1- \alpha) U_{t},
\end{array}
$$
the matrix $T$ is
$$T = G_{2,t},$$
and the matrix $C$ is
$$ C =  \left(
\begin{array}{cc}
 1 & 0 \\
 0 & 1
\end{array}
\right).$$

The Next Generation Matrix (NGM) is defined by $F(I-T)^{-1}$.
The spectral radius of this matrix is
\begin{eqnarray}
 {\cal Z} = \frac{ \frac{k}{N} (1 - \alpha) U_{t}}{1 - G_{2,t}}.\label{RBasic}
\end{eqnarray}

We note that the spectral radius of the matrix $C$, $\rho(C)$, is exactly equal to one; in order for the spectral radius
of the NGM to be identified with the reproduction number of the system, $\rho(C)$ must be less than one~\cite{Allen2008}.
We thus do not identify ${\cal Z}$ with the reproduction number, but note with interest that it
matches the expression in Equation~\eqref{eqn:R} that we had shown could be used
as a threshold expression for $O_{t+1}<O_{t}$.


\subsubsection{Twitter Model With Quarantine Class}
To examine how isolation of offending users would hinder the spread of offensive tweets, we extend the basic model in Equation \eqref{eqn:basic} by adding a Quarantine class where the parameter $1-\lambda$ is the probability that an offender becomes quarantined. 
We also include a parameter $1-\mu$ that describes the probability that a person in the Quarantine class becomes a stifler. When $\mu$ is close to one the offenders spend longer in the Quarantine class then when $\mu$ is close to zero.
The discrete-time equations of this extended model are:
\begin{align}
U_{t+1} & = U_{t} G_{1,t} \nonumber \\
O_{t+1} & = (1 - \alpha) (1-G_{1,t}) U_{t} + \lambda G_{2,t} O_{t} \nonumber \\
Q_{t+1} & = (1- \lambda) G_{2,t} O_{t} + \mu Q_{t} \nonumber \\
S_{t+1} & = S_{t} + \alpha (1-G_{1,t}) U_{t} + (1-G_{2,t}) O_{t} + (1-\mu) Q_{t},
\label{eqn:extended} 
\end{align}
where $G_{1,t}$ and $G_{2,t}$ are as in the basic model,
\begin{align}
G_{1,t} & = e^{-\frac{k}{N} O_{t}} \nonumber \\
G_{2,t} & = e^{-\frac{k}{N} S_{t}}. \nonumber
\end{align}
The parameters of the model are shown in Table \ref{Tab:params} and the compartmental diagram is shown in Figure \ref{flow_chart2}.

\begin{figure}[h!]
 \centering
 \includegraphics[height=2.5in]{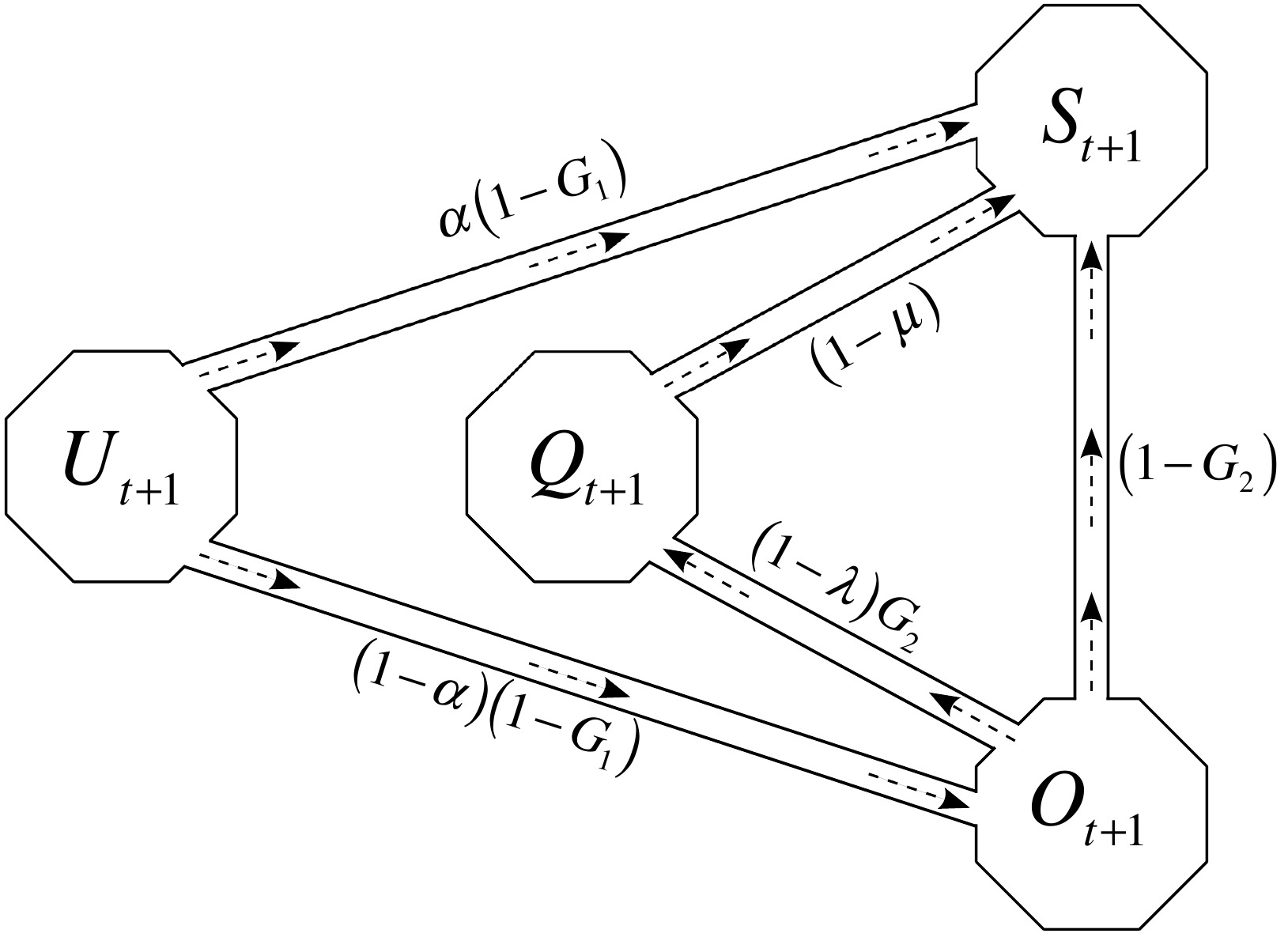} 
\caption{\label{Model:QuarComp}Compartmental Model Structure}
 \label{flow_chart2}
\end{figure}

As before in the basic model, the sum of the populations in each of the classes
adds to a constant, $N$. We also note that the population in the uninformed class monotonically asymptotically
decreases to zero, and that
the population in the Stifler class monotonically asymptocally increases to $N$.

In Theorem~\ref{thm:thresh_quar} in Appendix A, we show that if 
\begin{eqnarray}
  Z = {{k (1-\alpha) U_t}\over{N (1- \lambda G_{2,t})}} < 1
\label{eqn:Rquar}
\end{eqnarray}
then $O_{t+1} < O_t$.  
Since $U_t$ monotonically decreases to zero, and $(1-\lambda e^{-kS_t/N})$ monotonically
increases because $S_t$ monotonically increases, we see that $Z$ montonically decreases to zero.
Thus, just as in the basic model,
there will always exist a point at which $Z<1$, and $O_{t+1}<O_{t}$.  
However, we note that the smaller the 
$\lambda$ (ie; the higher the quarantine fraction $(1-\lambda)$), the
faster $Z$ will decrease to 0.  

In Theorem~\ref{thm:threshb} in Appendix A, we show additionally show that if 
\begin{eqnarray}
  Z = {{k (1-\alpha)}\over{(1- \lambda)}} < 1
\label{eqn:Rquarb}
\end{eqnarray}
then $O_{t+1}<O_t$ regardless of the values of $(U,O,Q,S)$

Further, we show in Theorem~\ref{thm:threshc} in Appendix A that if $k(1-\alpha)<1$ and
${\lambda\le 1-k(1-\alpha)}$, then ${O_{t+1}<O_t}$, regardless of the current
values of $(U,O,Q,S)$.  
We note that $k=10$ and $(1-\alpha)\sim0.001$ for
Twitter data, thus $k(1-\alpha)<1$. Thus for Twitter data there exists a 
quarantine fraction 
$\lambda$ such that the spread of offensive messages will be guaranteed to decline
at the next time step.



There are two equilibrium points for the model with the Quarantine class:
\begin{eqnarray}
(U_{1}^{*},\, O_{1}^{*},\, Q_{1}^{*},\, S_{1}^{*}) &=& \left(0,\,\, \frac{\lambda(\mu - 1) Q}{\lambda - 1},\,\, Q,\,\, \frac{\ln(\lambda) N}{k}
\right)\\
(U_{2}^{*},\, O_{2}^{*},\, Q_{2}^{*},\, S_{2}^{*}) &=& (U,\, 0,\, 0,\, S) 
\end{eqnarray}

\noindent The first equilibrium point is biologically irrelevant because it implies that the Stifler class will have a negative population since $0 \le \lambda < 1$. Therefore we disregard this equilibrium point. The second equilibrium point is analogous to the offender-free equilibrium of Equation \eqref{eqn:basic}. 

The Jacobian of the system is:
$$J = \left(
\begin{array}{cccc}
 G_{1,t} & -\frac{k}{N} G_{1,t} U_{t} & 0 & 0 \\
 (1- \alpha) (1-G_{1,t}) & \frac{k}{N} (1 - \alpha) G_{1,t} U_{t} + \lambda G_{2,t} & 0  & -\frac{k}{N} \lambda O_{t} G_{2,t} \\
 0 & (1- \lambda) G_{2,t} & \mu  & \frac{k}{N} O_{t} (1 - \lambda) G_{2,t} \\
 0 & \frac{k}{N} \alpha G_{1,t} U_{t} + (1-G_{2,t}) &  (1-\mu)  & 1 + \frac{k}{N} G_{2,t} O_{t} \\
\end{array}
\right). $$\\

\noindent The eigenvalues of the Jacobian evaluated at the offender-free equilibrium are 
 $\Lambda_{1,2} = 1$, $\Lambda_{3} = \mu$, and $\Lambda_{4} = (\mu + \lambda G_{2,t}) + \frac{k}{N} (1 - \alpha) U_{t}$.

As in Section~\ref{Section:Basic}, we note that the constraint $U_t+O_t+Q_t+S_t$ can be used to reduce the number of equations by one.  The
eigenvalues of the reduced system are
 $\Lambda_{1} = 1$, $\Lambda_{2} = \mu$, and $\Lambda_{3} = (\mu + \lambda G_{2,t}) + \frac{k}{N} (1 - \alpha) U_{t}$.
We note that if $k(1-\alpha)<1$ (which it is for Twitter data, which has $(1-\alpha)\sim0.001$ and $k=10$, as seen
in the previous section) then $\mu$ and $\lambda$ can always be chosen to ensure that the second and third eigenvalues are less than 1.
However the first eigenvalue is exactly equal to one.  As before, we find that the Jury criterion cannot be applied to assess the
stability of this equilibrium point, so we conclude that we find the stability to be currently indeterminate.

As in Section \ref{Section:Basic}, we begin computation of the  Next Generation Matrix for this system by reordering the equations
such that the infected classes (Offender and Quarantine) classes come first, and the uninfected (Uninformed and Stifler) classes 
follow\cite{Allen2008}.
The Jacobian of the rearranged equations, evaluated at the disease-free equilibrium is thus:
$$J_{OFE} = \left( \begin{array}{cccc}
+{{k}\over{N}} U (1-\alpha)+\lambda G_{2,t} & 0 & 0 & 0 \\
(1-\lambda)G_{2,t} & \mu & 0 & 0 \\
-{{k}\over{N}} U & 0 & 1 & 0\\
+{{k}\over{N}} \alpha U + (1-G_{2,t}) & (1-\mu) & 0 & 1 \\
\end{array} \right)$$

\noindent As before, we identify the components of this matrix with \cite{Allen2008}
$$ J_{OFE} = \left(
\begin{array}{cc}
  F + T & 0_{2x1} \\
  A & C
\end{array}
\right).
$$
Then the matrix $F$ is given by
$$ F = 
\left(
\begin{array}{cc}
+{{k}\over{N}} U (1-\alpha) & 0 \\
(1 - \lambda) G_{2,t} & 0 
\end{array}
 \right),
 $$
matrix $T$ is
$$ T =
\left(
\begin{array}{cc}
\lambda G_{2,t} & 0 \\
0 & \mu
\end{array}
 \right),
 $$
and the matrix $C$ is 
$$ C =
\left(
 \begin{array}{cc}
1 & 0 \\
0 & 1
\end{array} \right).
$$

\noindent The Next Generation Matrix is defined by $F(I-T)^{-1}$.
The spectral radius of this matrix is
\begin{eqnarray}
\mathcal{Z} = \frac{k (1 - \alpha) U_{t}}{N (1 - \lambda G_{2,t})}.
\label{eqn:Rquar}
\end{eqnarray}

We again note that the spectral radius of the matrix $C$, $\rho(C)$, is exactly equal to one
and thus we do not identify $Z$ with the reproduction number, but again note with interest that it
matches the expression in Equation~\eqref{eqn:Rquar} that we have shown could be used
as a threshold condition for $O_{t+1}<O_{t}$ in the Quarantine model.

%

\begin{table}[ht]
\begin{tabular}{| c | c | c | } \hline
\textbf{Symbol} & \textbf{Definition} & \textbf{Value}\\ \hline

$\bm{N} $ & Number of users in the population & $10^{7}$\\ \hline
$\bm{U_{t}}$ & Number of uninformed users at time $t$ & Varies\\ \hline
$\bm{O_{t}}$ & Number of offenders at time $t$ & Varies\\ \hline
$\bm{Q_{t}}$ & Number of quarantined users at time $t$ & $0$ at $t=0$\\ \hline
$\bm{S_{t}}$ & Number of stiflers at time $t$ & Varies\\ \hline
$\bm{(1 - G_{1,t})}$ & Probability an uninformed user sees the offensive tweet & Based on $O_{t} \,\& \,Q_{t}$\\ \hline
$\bm{(1 - G_{2,t})}$ & Probability an offender becomes a stifler & Based on $S_{t}$\\ \hline
\multirow{2}{*}{$\bm{(1 - \alpha)}$} & Probability an uninformed user becomes  &
\multirow{2}{*}{$0.001$} \\
 & an offender (after seeing the offensive tweet) & \\ \hline
\multirow{2}{*}{$\bm{\alpha}$} & Probability an uninformed user becomes  & \multirow{2}{*}{$0.999$} \\
  & a stifler (after seeing the offensive tweet) & \\ \hline
$\bm{(1 - \lambda)}$ & Probability an offender becomes quarantined & Varies\\ \hline
$\bm{(1 - \mu)}$ & Probability a quarantined user becomes a stifler & Varies\\ \hline
$\bm{k}$ & The average degree of the network  & 10 \\ \hline
\end{tabular}
\centering
\caption{\label{Tab:params}
Parameter values for the discrete-time models.}
\end{table}

\section{Results}
We determined the parameters of Equation \eqref{eqn:extended} using current Twitter data:
\begin{itemize}
\item $\alpha$, the probability that the user becomes a stifler
\item $\frac{S_{0}}{N}$ the fraction of the population that is initally in the Stifler class.
\item $k$, the degree of the network.
\end{itemize}
We assumed based on our data and information from the literature that $\alpha = 0.999$ and $k = 10$ as described in Section \ref{sec:Data}. We examined values of $\frac{S_{0}}{N}$ between $0.01$ to 1 in steps of $0.01$ and determined the model prediction for the number of retweets, and compared this to the number of retweets we observed in the data (namely approximately one retweet per tweet). This can be seen in Table \ref{Tab:params}. We found the value of $\frac{S_{0}}{N} = 0.07$ best fit the data. We let $N = 10,000,000$ but we found our results were not sensitive to the value of $N$, for large $N$. Figures \ref{fg:Basic_Model_Output} and \ref{fg:Quarantined_Model_Output} are numerical simulations of our basic and quarantined models respectively. These Figures visually show the dynamics of the systems.

\begin{figure}[h!]
\centering
\includegraphics[scale=0.5]{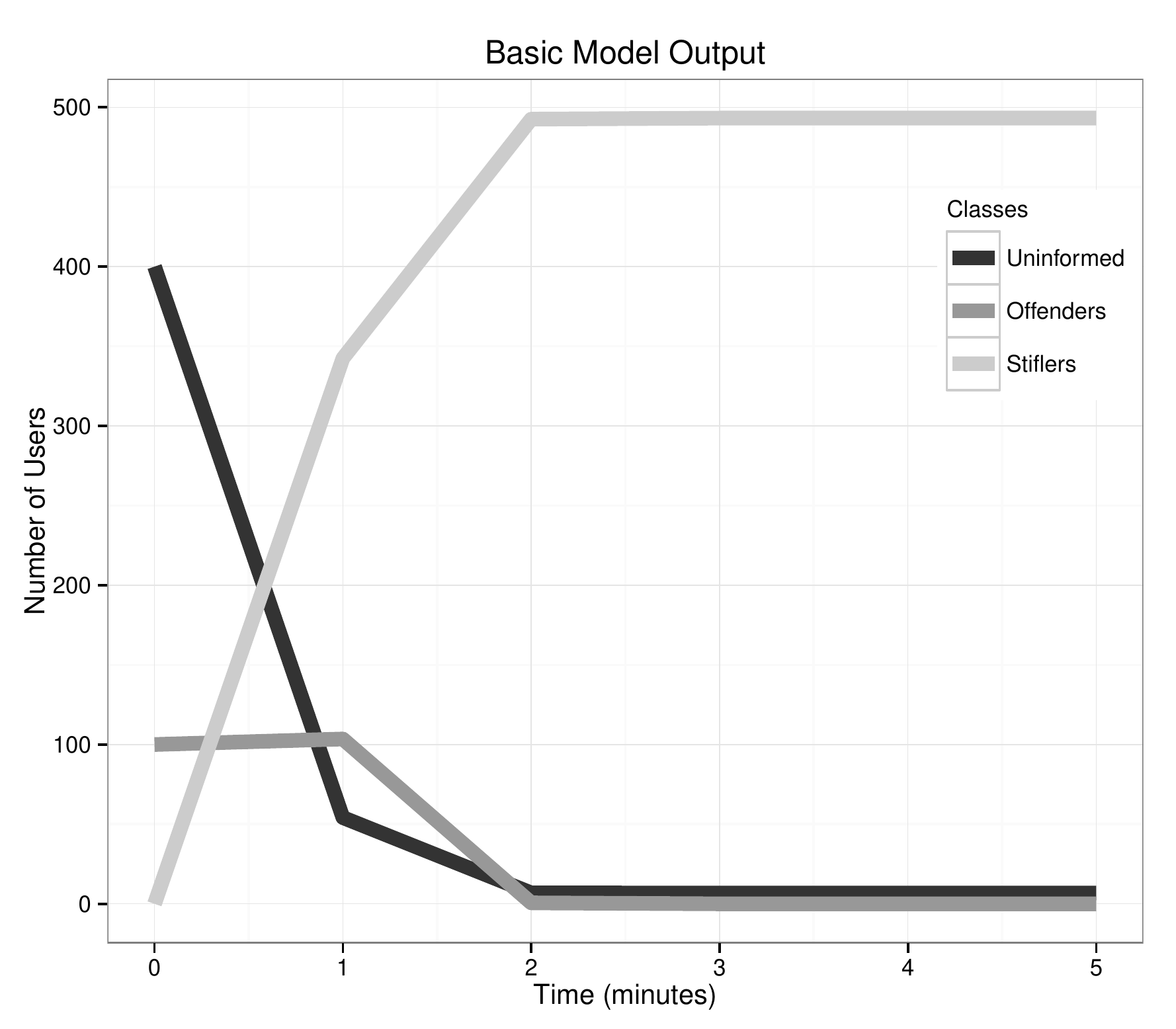}
\caption{This is the basic output of the model without quarantining. The parameters are $\alpha = 0.999$, $k = 10$. the Uninformed and Offender classes tend to zero while the Stifler class tends to $N$ which is the total population.
\label{fg:Basic_Model_Output}}
\end{figure}

\begin{figure}[h!]
\centering
\includegraphics[scale=0.5]{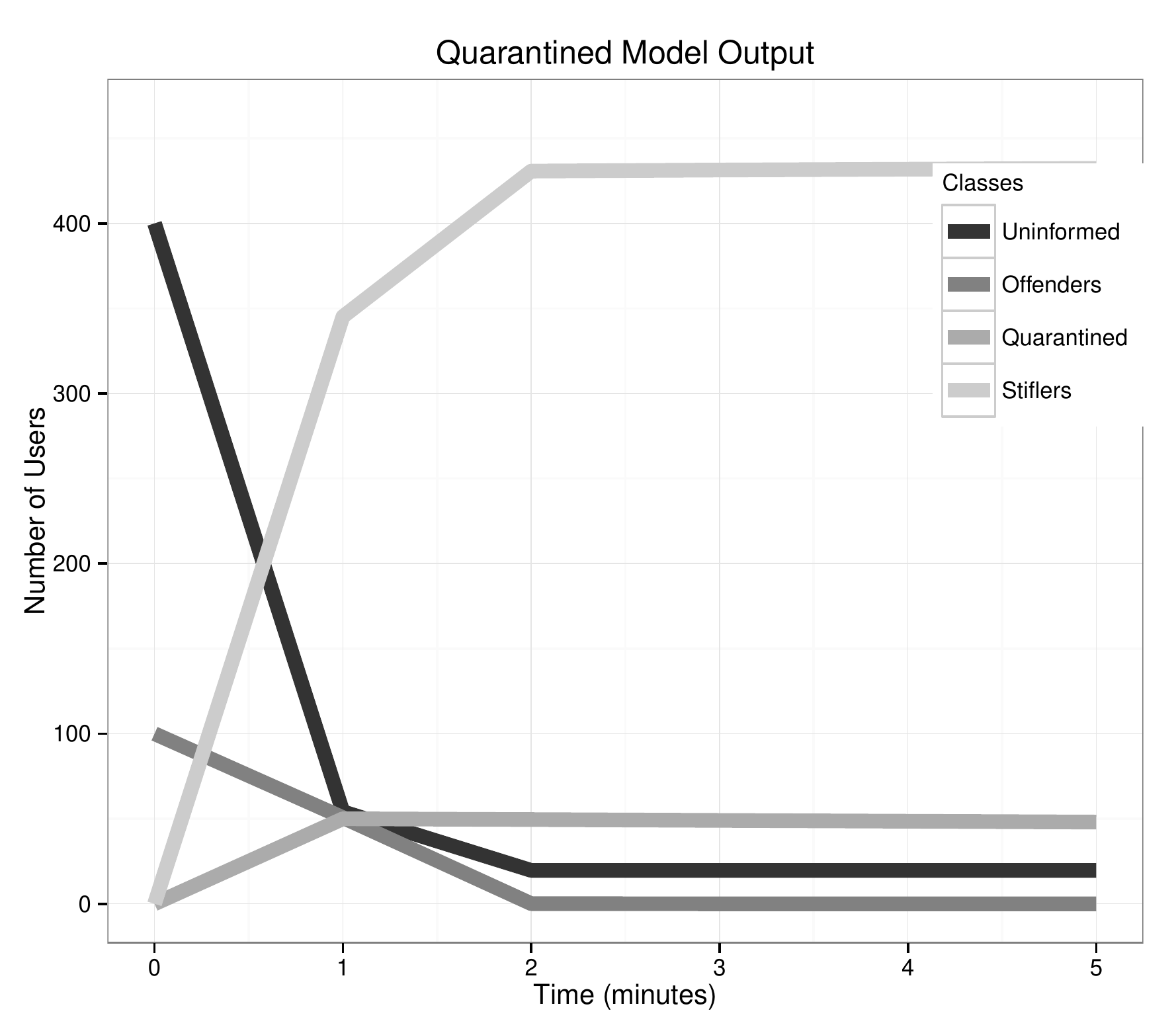}
\caption{This is the model with Quarantine class. The parameters are $\alpha = 0.999$, $k = 10$, $\mu = 0.99$, $\lambda = 0.5$. The Uninformed, Offender, and Quarantined classes tend to zero while the Stifler class tends to $N$ which is the total population.
\label{fg:Quarantined_Model_Output}}
\end{figure}

 We used the parameter values in Table \ref{Tab:params} to find solutions to Equation \eqref{eqn:extended}. In our computations, we assume the probability a quarantined user becomes a stifler, $(1 - \mu)$ is zero. To do this we examined values of $(1-\lambda)$, the quarantine fraction, between zero and one and obtained the model prediction for the number of retweets. The results are seen in Figure \ref{fg:LambdaVsThreshold}.

\begin{figure}[h!]
\centering
\includegraphics[scale=0.9]{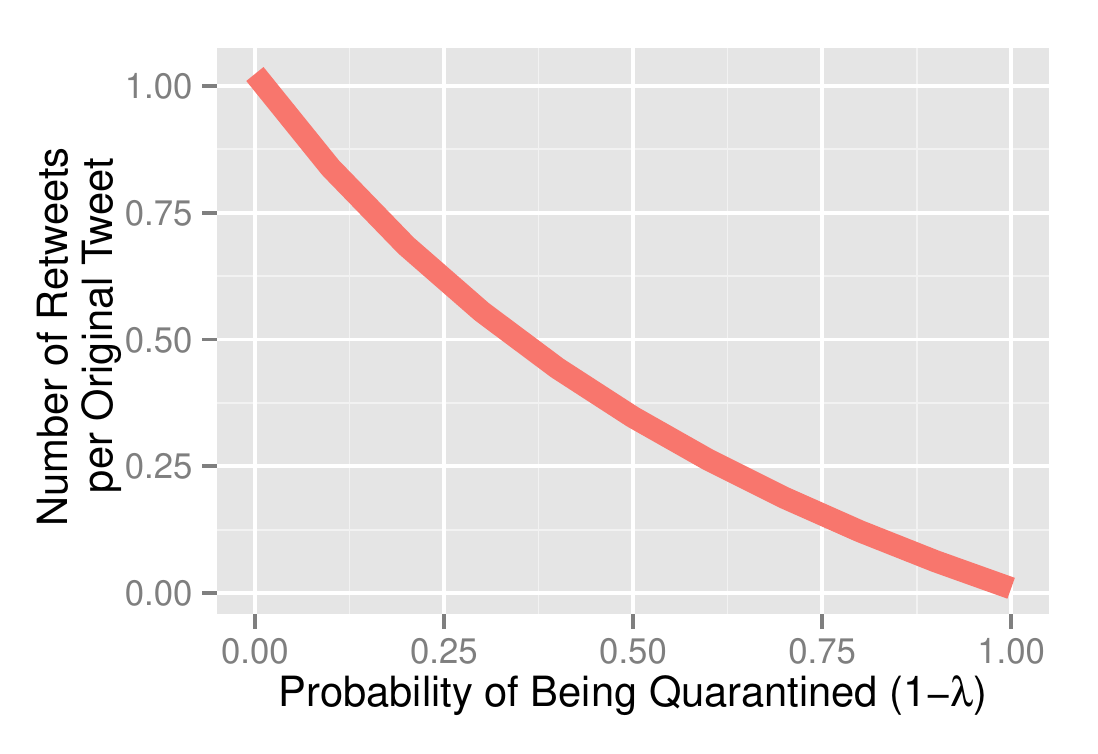}
\caption{ Prediction of the quarantine model for the number of retweets per original tweet as a function of the quarantine fraction, $(1-\lambda)$. For $(1-\alpha) = 0.001$, $k=10$, $S_0/N=0.07$. We assume that quarantined users are fully isolated from the rest of the community. As the probability of quarantine increases, the number of retweets decreases.
\label{fg:LambdaVsThreshold}}
\end{figure}

By looking at the number of offenders over time, while varying $\lambda$ we can see how the quarantine fraction, $(1-\lambda)$, affects the life-time of the tweet. As $(1-\lambda)$ increases, the model predicts a decrease in the average life-time of the tweet, see in Figure \ref{fg:OffendersVsTimeLambda}, i.e. the tweet on average dies out more quickly.

\begin{figure}[h!]
\centering
\includegraphics[scale=0.6]{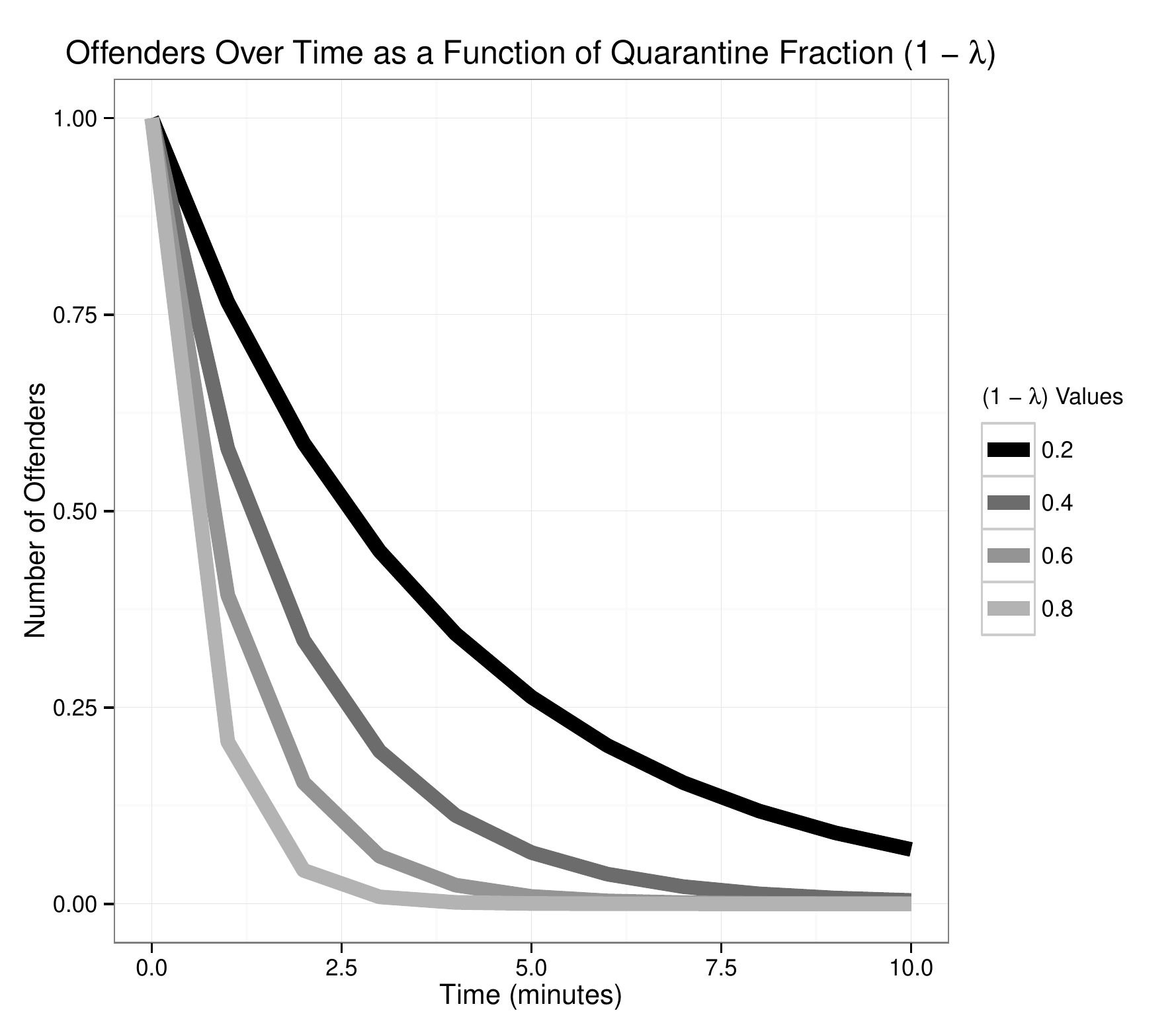}
\caption{Comparison of the number of offenders over time as $1 - \lambda$ varies. the parameters $\alpha = 0.999$, $\mu = 0.99$, and $k = 10$. For a low probability of being quarantined, there is a high number of offenders over time. As the probability of the quarantined increases, the offenders deceases}
\label{fg:OffendersVsTimeLambda}
\end{figure}

%
\clearpage

\section{Discussion}
In this paper we presented a discrete-time non-linear compartmental model to simulate the dynamics of the spread of offensive messages on Twitter. We developed the model to gain insight into reducing the number of offensive tweets seen by Twitter users. We then extended the model to include a Quarantine class whereby offending users were isolated from contact with the rest of the Twitter community. 
We found that the implementation of quarantine was an effective way to quickly reduce the spread
of offensive messages.  Further, we proved that if $k(1-\alpha)<1$, where $k$ is the degree of the network,
and $(1-\alpha)$ is the probability of retweeting, then if $\lambda<1-k(1-\alpha)$ the number of offenders
at the next time step will always be less than the number of offenders the time step before.  
We showed that this was independent of the the number of uninformed, offenders, quarantined, and
stiflers at that time step. 
From data we obtained from Twitter, 
and the published literature, we determined that $k=10$ and $(1-\alpha)\sim 0.001$,
thus for Twitter $k(1-\alpha)<1$ and there thus exists a $\lambda$ such that offensive tweets will be 
effectively suppressed.   

We performed numerical simulations to examine the effect of quarantine.
As shown in Figure 3, the quarantine of a certain fraction of offending Twitter users led to a disproportionately larger reduction in the number of new offenders. 

Previous studies in this area have used a wide variety of statistical, computational, and mathematical methods to understand the spread of information in social networks (see \cite{Budak2011,Wang2012, Han2012,Nekovee2007,Tripathy2010}, for instance). The use of a Quarantine class to hinder the spread of messages has not been previously considered.

There are some limitations of the model we have developed. For instance, we assume that once offenders leave the Quarantine class, they have learned their lesson and do not offend again. This is not realistic in all cases. Also, while we find that the quarantine of offenders is effective, we have not determined how such a quarantine can be implemented. One way to potentially implement the quarantine is to have Twitter use an algorithm that flags tweets as being potentially offensive based on keywords \cite{Xu2012}. Another way to implement a quarantine is to have users give negative rating to tweets they find offensive, for instance, a rating system that goes from zero to minus ten by integer values and lets each user choose the value that fits the type of tweets they want to see.

With this basic and quarantine model, there are other routes that could be taken to further this research. For instance, performing an analysis of a Markov chain model. We can also examine how quarantined users or stiflers stop the spread of a tweet through a network graph. Performing a network analysis on such a model would provide insight into which users should be quarantined first, or which groups to isolate from the overall population, to reduce the spread of these tweets. 

In conclusion, our model indicates that quarantine of users that spread offensive message would likely be effective. Our use of a discrete-time non-linear model highlights the usefulness of mathematical models in understanding the underlying dynamics of social networks, and opens up many interesting possibilities for future research.

\newpage
\section*{Acknowledgments} 
We would like to thank Dr.~Carlos Castillo-Chavez, Executive Director of the Mathematical and Theoretical Biology Institute (MTBI), for giving us the opportunity to participate in this research program.  We would also like to thank Co-Executive Summer Directors Dr.~Erika T.~Camacho and Dr.~Stephen Wirkus for their efforts in planning and executing the day to day activities of MTBI. We also want to give special thanks to Daniel Burkow for helping us redefine our topic, Kamal Barley for his graphic enhancements, Dr.~Luis Melara, Dr.~Leon Arriola, Xiaoguang Zhang, and Oyita Udiani for their invaluable help and feedback in writing this paper. This research was conducted in MTBI at the Mathematical, Computational and Modeling Sciences Center (MCMSC) at Arizona State University (ASU). This project has been partially supported by grants from the National Science Foundation (NSF - Grant DMPS-1263374), the National Security Agency (NSA - Grant H98230-13-1-0261), the Office of the President of ASU, and the Office of the Provost of ASU.

\newpage
\section{Appendix A} 
\label{appendixA}


\begin{lem}
\label{lem:gt0}
If $\mmu_{0}$, $\mmo_{0}$, $\mms_{0} \ge 0$  in the basic model in System~\ref{eqn:basic},
then $\ut$, $\ot$, $\st \ge 0$
\end{lem}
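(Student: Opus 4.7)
The plan is to proceed by strong induction on $t$, with the base case $t=0$ handled by hypothesis. The inductive hypothesis will be that $U_t, O_t, S_t \ge 0$, and I will derive $U_{t+1}, O_{t+1}, S_{t+1} \ge 0$ directly from the three update equations in System~\ref{eqn:basic}.

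The key auxiliary observation, which I would state and justify at the outset, is that whenever $O_t \ge 0$ and $S_t \ge 0$ (together with $k>0$ and $N>0$), the quantities $G_{1,t} = e^{-(k/N)O_t}$ and $G_{2,t} = e^{-(k/N)S_t}$ both lie in the interval $(0,1]$. Consequently $1-G_{1,t} \ge 0$ and $1-G_{2,t} \ge 0$. Combined with the standing assumption $0 \le \alpha \le 1$ (so $1-\alpha \ge 0$ as well), every coefficient appearing on the right-hand side of System~\ref{eqn:basic} is non-negative under the inductive hypothesis.

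With those bounds in hand, each of the three conclusions is immediate. For $U_{t+1} = U_t G_{1,t}$, the right-hand side is a product of two non-negative numbers. For $O_{t+1} = (1-\alpha)(1-G_{1,t})U_t + G_{2,t}O_t$, each of the two summands is a product of non-negative factors, so their sum is non-negative. For $S_{t+1} = S_t + \alpha(1-G_{1,t})U_t + (1-G_{2,t})O_t$, all three summands are non-negative by the same reasoning. This closes the induction.

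There is no real obstacle here; the lemma is essentially a sign-bookkeeping exercise, and the only substantive point is recognizing that the exponential forms of $G_{1,t}$ and $G_{2,t}$ automatically keep the ``transition probabilities'' $1-G_{1,t}$ and $1-G_{2,t}$ in $[0,1)$ whenever the populations stay non-negative — which is exactly what the induction is preserving. I would therefore keep the writeup short: a one-line base case, a clear statement of the $G_{i,t} \in (0,1]$ observation, and three one-line verifications for $U_{t+1}, O_{t+1}, S_{t+1}$.
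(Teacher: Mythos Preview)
Your proposal is correct and follows essentially the same approach as the paper: both arguments rest on the observation that $G_{1,t},G_{2,t}\in(0,1]$ whenever $O_t,S_t\ge 0$, which makes every term on the right-hand side of System~\ref{eqn:basic} non-negative. The paper presents this by unfolding the recursions (writing $U_{t+1}=U_0\,e^{-(k/N)\sum_i O_i}$, etc.) rather than as an explicit induction step, but the content is the same; your simultaneous induction on all three variables is in fact the cleaner packaging.
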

\begin{proof}
Let $\mmu_{0}$, $\mmo_{0}$, $\mms_{0} \ge 0$
\begin{align}
\mathcal{U}_{1} = \mmu_{0} e^{\frac{-k}{N}\mmo_{0}} &\\
\mmu_{2} = \mmu_{1} e^{\frac{-k}{N}\mmo_{1}} & = \mmu_{0} e^{\frac{-k}{N}(\mmo_{0}+\mmo_{1})}\\
\vdots \\
\mmu_{n+1} = \mmu_{n} e^{\frac{-k}{N}\mmo_{n}} & = \mmu_{0} e^{\frac{-k}{N}(\mmo_{0}+\mmo_{1}+\mmo_{2}+\dots+\mmo_{n})}
\end{align}
Since $e^{\frac{-k}{N}(\sum_{i}^{n} \mmo_{i})} > 0$, we can see that $\ut \ge 0 $ for all $t$.  Therefore, as $\oto \ge \ot \gs$, we can use a similar argument to show that for $\mmo_{0} \ge 0$
\begin{equation}
\oto \ge \ot \gs = \mmo_{0} e^{\frac{-k}{N}(\mms_{0}+\mms_{1}+\mms_{2}+\dots+\mms_{t})}
\end{equation}
Lastly to show that $\st \ge 0$ for all $t$ we notice that $\sto \ge \st$ since $\ut$ and $\ot$ are always non-negative. Therefore since $\mms_0 \ge 0$, and $\sto \ge \st$ then $\st \ge 0$
\end{proof}

\begin{lem}
\label{lem:bound}
The basic model given by System~\ref{eqn:basic}
is bounded such that the sum of the compartments remains constant.
\end{lem}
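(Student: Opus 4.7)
The plan is to show directly that $U_{t+1}+O_{t+1}+S_{t+1}=U_t+O_t+S_t$ for every $t$, and then conclude by a one-line induction on $t$ that the total equals its initial value $N=U_0+O_0+S_0$. Since the system is discrete, no limiting or continuity argument is needed; the identity will follow from purely algebraic cancellation among the right-hand sides of the three equations in System~\ref{eqn:basic}.

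First I would add the three update equations of System~\ref{eqn:basic}. Grouping the $U_t$ terms gives
\[
U_t G_{1,t}+(1-\alpha)(1-G_{1,t})U_t+\alpha(1-G_{1,t})U_t
=U_t\bigl[G_{1,t}+(1-G_{1,t})\bigr]=U_t,
\]
where I use $(1-\alpha)+\alpha=1$ to collapse the two terms weighted by $(1-G_{1,t})$. Next I group the $O_t$ terms:
\[
G_{2,t}O_t+(1-G_{2,t})O_t=O_t.
\]
The remaining contribution is simply the $S_t$ on the right-hand side of the third equation. Summing these pieces yields $U_{t+1}+O_{t+1}+S_{t+1}=U_t+O_t+S_t$.

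With the one-step invariance in hand, induction on $t$ gives $U_t+O_t+S_t=U_0+O_0+S_0=N$ for all $t\ge 0$, which is the boundedness claim. There is no real obstacle here; the only thing to be careful about is bookkeeping the coefficients so that the $(1-\alpha)$ and $\alpha$ pieces and the $G_{2,t}$ and $(1-G_{2,t})$ pieces are correctly paired before cancelling. Notably, the argument does not use the specific functional form of $G_{1,t}$ or $G_{2,t}$ — only that the outflows from each compartment exactly match the inflows to the others — so the same telescoping will extend verbatim to the quarantine model of System~\ref{eqn:extended} if one later wants the analogous boundedness statement there.
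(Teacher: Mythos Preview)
Your proposal is correct and matches the paper's own proof essentially line for line: the paper defines $N_t=U_t+O_t+S_t$, adds the three update equations, and regroups the $U_t$-terms and $O_t$-terms exactly as you do to obtain $N_{t+1}=N_t$. Your additional remarks (the explicit induction on $t$ and the observation that the specific forms of $G_{1,t}$, $G_{2,t}$ are irrelevant) are not in the paper but are harmless elaborations.
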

\begin{proof}{}
Let $N_t = \ut + \ot + \st$;
\begin{align*}
N_{t+1} & = \uto + \oto + \sto \\
& = \ut \go + (1-\alpha)(1-\go)\ut + \ot \gs + \alpha (1 - \go)\ut + (1- \gs) \ot + \st \\
& = \ut \go + (1-\alpha)(1-\go)\ut + \alpha (1 - \go)\ut + \ot \gs + (1- \gs) \ot + \st \\
& = \ut + \ot + \st \\
& = N_t
\end{align*}
\end{proof}

\begin{lem}
\label{lem:decU}
If $\mmu_0$, $\mmo_0 > 0$ in the basic model in System~\ref{eqn:basic}, then $0 < \uto < \ut$ for all $t$.
\end{lem}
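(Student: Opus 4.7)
The plan is a straightforward induction. The statement $U_{t+1} = U_t G_{1,t} = U_t e^{-kO_t/N}$ shows that positivity of $U_{t+1}$ follows from positivity of $U_t$, and the strict inequality $U_{t+1} < U_t$ requires the exponent to be strictly negative, i.e.\ $O_t > 0$. So the real content is propagating strict positivity of both $U_t$ and $O_t$ forward in time, and Lemma~\ref{lem:gt0} already gives non-negativity, which is a useful companion.

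First I would set up an induction on $t$ with the stronger inductive hypothesis $U_t > 0$ and $O_t > 0$. The base case $t=0$ is given. For the inductive step, assuming $U_t, O_t > 0$: since $k > 0$ and $O_t > 0$, we have $G_{1,t} = e^{-kO_t/N} \in (0,1)$, so
\begin{equation*}
0 < U_{t+1} = U_t G_{1,t} < U_t,
\end{equation*}
which already gives the conclusion at this time step. It remains to show $O_{t+1} > 0$ to keep the induction running. From
\begin{equation*}
O_{t+1} = (1-\alpha)(1-G_{1,t}) U_t + G_{2,t} O_t,
\end{equation*}
the second summand $G_{2,t} O_t = e^{-kS_t/N} O_t$ is strictly positive because $O_t > 0$ by hypothesis, $S_t \ge 0$ by Lemma~\ref{lem:gt0}, and the exponential is always positive; the first summand is non-negative (and indeed strictly positive when $\alpha < 1$, as for our Twitter parameters) since $U_t > 0$ and $G_{1,t} < 1$. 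Hence $O_{t+1} > 0$, completing the induction.

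I do not expect any real obstacle here, as the argument is just chasing signs through the update equations. The one subtlety worth flagging is that the strict inequality $U_{t+1} < U_t$ genuinely needs $O_t > 0$ rather than merely $O_t \ge 0$; this is the only reason we must carry $O_t > 0$ along in the inductive hypothesis rather than invoking Lemma~\ref{lem:gt0} directly, and it is also why the hypothesis of the lemma requires $O_0 > 0$ in addition to $U_0 > 0$.
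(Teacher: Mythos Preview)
Your proposal is correct and follows essentially the same route as the paper: both arguments reduce the strict inequality $U_{t+1}<U_t$ to showing $O_t>0$ for all $t$, and both establish this by observing that the term $G_{2,t}O_t$ in the update for $O_{t+1}$ is strictly positive whenever $O_t>0$. The only cosmetic difference is that the paper invokes the closed-form expression $U_{t+1}=U_0\exp\bigl(-\tfrac{k}{N}\sum_{i=0}^t O_i\bigr)$ derived in Lemma~\ref{lem:gt0}, whereas you carry out the induction step-by-step directly from the recursion; your version is arguably cleaner and more self-contained.
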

\begin{proof}
Let $\mmu_0$, $\mmo_0 > 0$. From (7),(8) we know
\begin{align*}
& \uto = \mmu_{0} e^{\frac{-k}{N}(\mmo_{0}+\mmo_{1}+\mmo_{2}+\dots+\mmo_{t}) }\\
& \oto \ge \mmo_{0} e^{\frac{-k}{N}(\mms_{0}+\mms_{1}+\mms_{2}+\dots+\mms_{t})}
\end{align*}
Notice that if $\sum_{i=0}^{t} \mmo_{i} >0$ then $e^{\frac{-k}{N}(\sum_{i=0}^{t} \mmo_{i})} \in (0,1)$.  Since $\mmu_{0}>0$, we can see that $\ut > 0$ for all $t$ and consequently
\begin{align*}
\oto & = (1-\alpha)(1-\go)\ut + \ot \gs \\
\implies \oto & > \ot \gs > 0
\end{align*}
Therefore since $\{ \mmo_{i} \}_{i}^{t}>0$, we can say $\sum_{i=0}^{t} \mmo_{i} >0$ and $\uto < \ut$.
\end{proof}

\begin{theorem}
\label{thm:thresh}
When ${Z = {{k (1-\alpha) U_t}\over{N (1-G_{2,t})}} < 1}$
then ${O_{t+1}<O_t}$ in the basic model in System~\ref{eqn:basic}. 
\end{theorem}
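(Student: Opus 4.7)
The plan is to show the conclusion by rearranging the Offender recursion and applying the standard elementary bound $1 - e^{-x} \le x$ for $x \ge 0$ (strict when $x > 0$). Writing out the recursion from System~\ref{eqn:basic}, we have $O_{t+1} = (1-\alpha)(1-G_{1,t})U_t + G_{2,t}O_t$, so the desired conclusion $O_{t+1} < O_t$ is algebraically equivalent to
\begin{equation*}
(1-\alpha)(1-G_{1,t})U_t < (1-G_{2,t})O_t.
\end{equation*}
The plan is to bound the left-hand side from above using the exponential inequality and then invoke the hypothesis $Z < 1$ to dominate that upper bound.

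First, I would apply $1 - e^{-x} \le x$ with $x = kO_t/N \ge 0$ (we may assume $O_t > 0$, else $O_{t+1} = O_t = 0$ and the strict inequality is vacuous; nonnegativity of $O_t$ comes from Lemma~\ref{lem:gt0}). This gives $1 - G_{1,t} \le kO_t/N$, and hence
\begin{equation*}
(1-\alpha)(1 - G_{1,t}) U_t \;\le\; \frac{k(1-\alpha)U_t}{N}\, O_t.
\end{equation*}
Second, the hypothesis $Z = \frac{k(1-\alpha)U_t}{N(1-G_{2,t})} < 1$, after multiplying through by the positive quantity $(1-G_{2,t})O_t$ (which is positive provided $S_t > 0$ and $O_t > 0$, as then $G_{2,t} = e^{-kS_t/N} < 1$), yields
\begin{equation*}
\frac{k(1-\alpha)U_t}{N}\, O_t \;<\; (1-G_{2,t})\, O_t.
\end{equation*}
Chaining the two displayed inequalities gives exactly the rearranged target.

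The main obstacle is essentially notational rather than mathematical: one must be careful that the exponential bound is applied in the correct direction (so that the majorization is preserved when combined with the hypothesis), and that the degenerate boundary cases $O_t = 0$ or $S_t = 0$ are acknowledged. The first degenerate case makes $O_{t+1} = O_t = 0$, so the strict inequality requires $O_t > 0$; the second makes $1 - G_{2,t} = 0$, in which case $Z$ is undefined, so the hypothesis implicitly presumes $S_t > 0$. With these mild provisos (consistent with the nonnegativity conclusions of Lemmas~\ref{lem:gt0} and~\ref{lem:decU}), the proof reduces to two short lines of inequality manipulation, and no induction on $t$ is needed since the statement is a per-step implication.
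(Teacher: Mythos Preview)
Your proposal is correct and follows essentially the same approach as the paper: both use the elementary bound $1-e^{-x}\le x$ on $G_{1,t}$ together with the hypothesis $Z<1$ to control the recursion for $O_{t+1}$. Your version is slightly more streamlined in that you rearrange to the target inequality $(1-\alpha)(1-G_{1,t})U_t<(1-G_{2,t})O_t$ first, and you are more explicit about the degenerate cases $O_t=0$ and $S_t=0$ than the paper, but the underlying argument is the same.
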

\begin{proof}{}
For the basic model in System~\ref{eqn:basic} we have
\begin{eqnarray}
   O_{t+1} = (1-\alpha)(1-G_{1,t})U_t + G_{2,t} O_t
\label{eqn:O}
\end{eqnarray}

\noindent Note that for all ${(kO_t/N)>0}$ the following is true
\begin{eqnarray}
  (1-e^{-kO_t/N})< kO_t/N
\end{eqnarray}
Since ${G_{1,t}=e^{-kO_t/N}}$, this implies that
\begin{eqnarray}
  (1-G_{1,t})N/k < O_t
\label{eqn:less}
\end{eqnarray}

\noindent Now, we wish to show that when
\begin{eqnarray}
  Z = {{k (1-\alpha) U_t}\over{N (1-G_{2,t})}} < 1
\label{eqn:Z}
\end{eqnarray}
then
\begin{eqnarray}
  O_{t+1} < O_t. \nonumber
\end{eqnarray}
Note that Equation \eqref{eqn:Z} implies that
\begin{eqnarray}
  (1-\alpha) U_t < {{N}\over{k}} (1-G_{2,t}). \nonumber
\end{eqnarray}
Substituting this into Equation~\eqref{eqn:O} yields
\begin{eqnarray}
O_{t+1} < {{N}\over{k}} (1-G_{1,t}) (1- G_{2,t}) + G_{2,t} O_t
\end{eqnarray}
From Equation~\eqref{eqn:less} we know that ${{{N}\over{k}} (1-G_{1,t})<O_t}$, thus
\begin{eqnarray}
O_{t+1} & < & O_t (1- G_{2,t}) +  G_{2,t} O_t
\end{eqnarray}
The LHS is equal to $O_t$.  Thus
\begin{eqnarray}
O_{t+1}<O_t.
\end{eqnarray}
\end{proof}

\begin{theorem}
\label{thm:thresh_quar}
 When ${Z = {{k (1-\alpha) U_t}\over{N (1-\lambda G_{2,t})}} < 1}$
then ${O_{t+1}<O_t}$ in the Quarantine model.
\end{theorem}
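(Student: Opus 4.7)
The plan is to adapt the proof of Theorem~\ref{thm:thresh} almost verbatim, since the Quarantine model's evolution equation for $O_{t+1}$ differs from that of the basic model only by a factor of $\lambda$ attached to $G_{2,t}$ in the ``self-loop'' term. Concretely, in the extended system we have
\begin{equation*}
O_{t+1} = (1-\alpha)(1 - G_{1,t}) U_t + \lambda G_{2,t} O_t,
\end{equation*}
so the same chain of inequalities used before should go through with $G_{2,t}$ replaced by $\lambda G_{2,t}$ in every step where it appears.

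First I would reuse the elementary inequality $1 - e^{-x} < x$ for $x>0$ to obtain
\begin{equation*}
\frac{N}{k}(1 - G_{1,t}) < O_t,
\end{equation*}
valid whenever $O_t > 0$ (the case $O_t = 0$ trivially gives $O_{t+1}=0$, matching $O_t$ but not strictly less; I would flag this as a boundary case but the interesting content is $O_t > 0$). Next, I would rewrite the hypothesis $Z < 1$ in the equivalent form
\begin{equation*}
(1-\alpha) U_t < \frac{N}{k}\bigl(1 - \lambda G_{2,t}\bigr),
\end{equation*}
which is exactly the quarantine analogue of the substitution step in Theorem~\ref{thm:thresh}.

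Plugging this bound into the expression for $O_{t+1}$ gives
\begin{equation*}
O_{t+1} < \frac{N}{k}(1 - G_{1,t})\bigl(1 - \lambda G_{2,t}\bigr) + \lambda G_{2,t} O_t.
\end{equation*}
Applying the first inequality $\tfrac{N}{k}(1-G_{1,t}) < O_t$ to the first term on the right (noting that $1 - \lambda G_{2,t} \ge 0$ since $\lambda \in [0,1]$ and $G_{2,t}\in(0,1]$, so the direction of the inequality is preserved) yields
\begin{equation*}
O_{t+1} < O_t \bigl(1 - \lambda G_{2,t}\bigr) + \lambda G_{2,t} O_t = O_t,
\end{equation*}
which is the desired conclusion.

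I do not anticipate any real obstacles: the proof is a structural mirror of Theorem~\ref{thm:thresh}, and the only points that warrant a sentence of care are (i) the sign of $1 - \lambda G_{2,t}$ so that multiplying the inequality $\tfrac{N}{k}(1-G_{1,t}) < O_t$ by it preserves its direction, and (ii) the degenerate case $O_t=0$ (where $G_{1,t}=1$ forces $O_{t+1}=0$, so strict decrease fails trivially and one should either restrict to $O_t > 0$ or read the conclusion as $O_{t+1} \le O_t$ in that boundary instance).
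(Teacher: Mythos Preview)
Your proposal is correct and follows essentially the same route as the paper's own proof: both use the elementary bound $1-e^{-x}<x$ to get $\tfrac{N}{k}(1-G_{1,t})<O_t$, rewrite $Z<1$ as $(1-\alpha)U_t<\tfrac{N}{k}(1-\lambda G_{2,t})$, substitute, and collapse the two terms to $O_t$. Your added remarks on the sign of $1-\lambda G_{2,t}$ and on the degenerate case $O_t=0$ are points the paper leaves implicit, so if anything your write-up is slightly more careful.
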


\begin{proof}
For the Quarantine model we have
\begin{eqnarray}
   O_{t+1} = (1-\alpha)(1-G_{1,t})U_t + \lambda G_{2,t} O_t
\label{eqn:O}
\end{eqnarray}

\noindent Note that for all ${(kO_t/N)>0}$ the following is true
\begin{eqnarray}
  (1-e^{-kO_t/N})< kO_t/N
\end{eqnarray}
Since ${G_{1,t}=e^{-kO_t/N}}$, this implies that
\begin{eqnarray}
  (1-G_{1,t})N/k < O_t
\label{eqn:less}
\end{eqnarray}

\noindent Now, we wish to show that when
\begin{eqnarray}
  Z = {{k (1-\alpha) U_t}\over{N (1-\lambda G_{2,t})}} < 1
\label{eqn:Z}
\end{eqnarray}
then
\begin{eqnarray}
  O_{t+1} < O_t. \nonumber
\end{eqnarray}
Note that Equation \eqref{eqn:Z} implies that
\begin{eqnarray}
  (1-\alpha) U_t < {{N}\over{k}} (1-\lambda G_{2,t}). \nonumber
\end{eqnarray}
Substituting this into Equation~\eqref{eqn:O} yields
\begin{eqnarray}
O_{t+1} < {{N}\over{k}} (1-G_{1,t}) (1-\lambda G_{2,t}) + \lambda G_{2,t} O_t
\end{eqnarray}
From Equation~\eqref{eqn:less} we know that ${{{N}\over{k}} (1-G_{1,t})<O_t}$, thus
\begin{eqnarray}
O_{t+1} & < & O_t (1-\lambda G_{2,t}) + \lambda G_{2,t} O_t
\end{eqnarray}
The LHS is equal to $O_t$.  Thus
\begin{eqnarray}
O_{t+1}<O_t.
\end{eqnarray}

\end{proof}

\begin{theorem}
\label{thm:threshb}
When ${Z = {{k (1-\alpha)}\over{(1-\lambda)}} < 1}$ then ${O_{t+1}<O_t}$ in the Quarantine model.
\end{theorem}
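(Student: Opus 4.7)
The plan is to chain together two elementary upper bounds on the two summands in
$$O_{t+1} = (1-\alpha)(1-G_{1,t})U_t + \lambda G_{2,t} O_t,$$
and then invoke the hypothesis $k(1-\alpha) + \lambda < 1$ (which is an equivalent rewriting of $Z<1$) to conclude $O_{t+1}<O_t$. This makes no use of $U_t$, $S_t$, $Q_t$ individually, which is why the bound is state-independent. I would assume throughout that $O_t>0$, since otherwise $G_{1,t}=1$ forces $O_{t+1}=0=O_t$ and the strict inequality is vacuous.

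The key ingredient for the first summand is the standard inequality $1-e^{-x}<x$ for $x>0$, applied with $x=kO_t/N$, which gives $(1-G_{1,t})<kO_t/N$. Combining this with the global bound $U_t\le N$ (a consequence of Lemma~\ref{lem:bound} together with non-negativity from Lemma~\ref{lem:gt0}, transported to the Quarantine model) yields
$$(1-\alpha)(1-G_{1,t})U_t < (1-\alpha)\,\frac{k O_t}{N}\,N = k(1-\alpha)\,O_t.$$
For the second summand, I would simply use $G_{2,t}=e^{-kS_t/N}\le 1$, giving $\lambda G_{2,t} O_t \le \lambda O_t$. Adding these two estimates produces the clean bound
$$O_{t+1} < \bigl(k(1-\alpha) + \lambda\bigr)O_t.$$

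Now the hypothesis $Z=k(1-\alpha)/(1-\lambda)<1$ is algebraically equivalent to $k(1-\alpha)<1-\lambda$, i.e.\ $k(1-\alpha)+\lambda<1$, so the coefficient in front of $O_t$ is strictly less than $1$, and $O_{t+1}<O_t$ follows. There is really no serious obstacle here; the only thing to be careful about is that both upper bounds are applied in the correct direction (strict inequality on the exponential bound, non-strict on $G_{2,t}\le 1$) so that the final inequality is strict, and that the bound $U_t\le N$ is explicitly justified from the conservation law for the Quarantine system, analogous to Lemma~\ref{lem:bound}.
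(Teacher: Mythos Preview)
Your proof is correct. The paper takes a two-step route: it first establishes in Theorem~\ref{thm:thresh_quar} that the state-dependent condition $k(1-\alpha)U_t/[N(1-\lambda G_{2,t})]<1$ forces $O_{t+1}<O_t$, and then proves the present theorem by showing that $k(1-\alpha)/(1-\lambda)$ dominates that state-dependent quantity for all $U_t,S_t$ (using $U_t\le N$ and $G_{2,t}\le 1$). You instead flatten these two steps into one: you bound each summand of $O_{t+1}$ directly, using exactly the same three elementary inequalities ($1-e^{-x}<x$, $U_t\le N$, $G_{2,t}\le 1$), and arrive at $O_{t+1}<\bigl(k(1-\alpha)+\lambda\bigr)O_t$ without ever passing through the intermediate threshold expression. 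The payoff is a self-contained argument that does not rely on Theorem~\ref{thm:thresh_quar}; the paper's organization, by contrast, makes the comparison between the state-dependent and state-independent thresholds explicit, which is informative in its own right.
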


\begin{proof}
In Theorem \ref{thm:thresh_quar} we showed that when
${Z = {{k (1-\alpha) U_t}\over{N (1-\lambda G_{2,t})}} < 1}$
then $O_{t+1}<O_t$ in the Quarantine model.
Thus it suffices here to show that
\begin{eqnarray}
{{k (1-\alpha)}\over{(1-\lambda )}}\ge{{k (1-\alpha) U_t/N}\over{(1-\lambda e^{-kS_t/N})}}
\label{eqn:obj}
\end{eqnarray}
for all $U_t$ and $S_t$.

Note that the numerator of the LHS is always less than or equal to  $k(1-\alpha)$ because ${0\le U_t/N\le1}$,
and the denominator
of the LHS is always greater than or equal to $(1-\lambda)$, because ${0\le S_t/N\le1}$ and $k>0$ thus
${e^{-kS_t/N}<1}$.  Thus we have shown that the LHS is at most the RHS. And thus
when ${Z = {{k (1-\alpha)}\over{(1-\lambda)}} < 1}$,
then ${O_{t+1}<O_t}$ in the Quarantine model.

\end{proof}

\begin{theorem}
\label{thm:threshc}
When ${k(1-\alpha)<1}$ and
${\lambda\le 1-k(1-\alpha)}$,
then ${Z = {{k (1-\alpha)}\over{(1-\lambda}} < 1}$ and ${O_{t+1}<O_t}$.
\end{theorem}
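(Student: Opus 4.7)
The plan is to derive the theorem as an essentially immediate algebraic consequence of the hypothesis, followed by an appeal to Theorem \ref{thm:threshb}. First I would rewrite the standing assumption ${\lambda \le 1 - k(1-\alpha)}$ as ${1-\lambda \ge k(1-\alpha)}$. Under the additional assumption ${k(1-\alpha)<1}$, combined with the background constraints ${k>0}$ and ${\alpha<1}$ (so that ${k(1-\alpha)>0}$), this chain of inequalities forces ${0 < 1-\lambda}$, which certifies that the denominator of ${Z=k(1-\alpha)/(1-\lambda)}$ is strictly positive and the expression is well-defined.

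Next I would divide both sides of ${1-\lambda \ge k(1-\alpha)}$ by the positive quantity ${1-\lambda}$ to obtain ${Z = k(1-\alpha)/(1-\lambda) \le 1}$. With the strict-inequality version of the hypothesis (either strengthening to ${\lambda < 1 - k(1-\alpha)}$, or noting that ${k(1-\alpha)<1}$ together with ${\lambda\le 1-k(1-\alpha)<1}$ yields the strict bound in the cases of interest), this gives ${Z<1}$. The conclusion ${O_{t+1}<O_t}$ then follows at once by invoking Theorem \ref{thm:threshb}, which has already established that ${Z<1}$ (in this particular form) is sufficient for the decrease of the offender population, independently of the current values of $(U,O,Q,S)$.

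The main obstacle, such as it is, is purely bookkeeping: one has to keep track of the distinction between weak and strict inequality so that the $Z<1$ condition required by Theorem \ref{thm:threshb} actually holds, and one has to verify that the denominator ${1-\lambda}$ is positive before dividing. Neither is a real technical difficulty, so the overall proof amounts to a one-line algebraic reduction plus a citation of the previous theorem; no new model-specific estimates, no manipulation of $G_{1,t}$ or $G_{2,t}$, and no appeal to the dynamics of $U_t$ or $S_t$ are required at this stage.
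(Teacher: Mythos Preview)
Your proposal is correct and matches the paper's own argument essentially line for line: the paper also rearranges $\lambda\le 1-k(1-\alpha)$ to bound the denominator, reduces $Z$ to at most $k(1-\alpha)/(k(1-\alpha))=1$, and (implicitly) defers $O_{t+1}<O_t$ to Theorem~\ref{thm:threshb}. If anything, you are more careful than the paper about the weak-versus-strict inequality and the positivity of $1-\lambda$; the paper writes the final bound as strict without comment.
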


\begin{proof}
We have
$\lambda\le 1-k(1-\alpha)$, and we also have that
$k(1-\alpha)<1$, which ensures that $\lambda>0$.
We thus have that
$Z = {{k (1-\alpha)}\over{(1-\lambda }} < {{k(1-\alpha)}\over{(1-[1-k(1-\alpha)])}}$.
The RHS is equal to 1, thus we have shown that
when $k(1-\alpha)<1$ and
$\lambda\le 1-k(1-\alpha)$,
then $Z = {{k (1-\alpha)}\over{(1-\lambda}} < 1$.
\end{proof}

\newpage

\end{document}